\theoremstyle{definition}
\theoremstyle{theorem}
\newtheorem{problem}{Problem}
\newtheorem{proposition}{Proposition}
\newtheorem{theorem}{Theorem}
\theoremstyle{definition}
\theoremstyle{remark}
\newcommand{\Bc}{\mathcal{B}}
\newcommand{\Dc}{\mathcal{D}}
\newcommand{\Fc}{\mathcal{F}}
\newcommand{\Hc}{\mathcal{H}}
\newcommand{\Ic}{\mathcal{I}}
\newcommand{\Jc}{\mathcal{J}}
\newcommand{\Kc}{\mathcal{K}}
\newcommand{\Lc}{\mathcal{L}}
\newcommand{\Oc}{\mathcal{O}}
\newcommand{\Pc}{\mathcal{P}}
\newcommand{\Rc}{\mathcal{R}}
\newcommand{\Uc}{\mathcal{U}}
\newcommand{\As}{\mathscr{A}}
\newcommand{\Fs}{\mathscr{F}}
\newcommand{\Ls}{\mathscr{L}}
\newcommand{\Ns}{\mathscr{N}}
\newcommand{\Os}{\mathscr{O}}
\newcommand{\Vs}{\mathscr{V}}
\newcommand{\Cb}{\mathbb{C}}
\newcommand{\Nb}{\mathbb{N}}
\newcommand{\Rb}{\mathbb{R}}
\newcommand{\Tb}{\mathbb{T}}
\newcommand{\Bf}{\mathfrak{B}}
\newcommand{\Df}{\mathfrak{D}}
\newcommand{\rs}{\mathcal{R}}
\newcommand{\es}{\mathcal{J}}
\newcommand{\one}{\mathbb{1}}
\newcommand{\bin}{{\rm bin}}
\newcommand{\tr}{{\rm tr}}
\newcommand{\Span}{{\rm span}}
\newcommand{\alg}{{\rm alg}}
\newcommand{\ket}[1]{\left| #1 \right>}
\newcommand{\bra}[1]{\left< #1 \right|}
\newcommand{\inner}[2]{\left< #1,#2 \right>}
\newcommand{\expect}[1]{\left< #1 \right>}
\newcommand{\ketbra}[2]{\left\vert #1 \middle>\middle< #2 \right\vert}
\definecolor{darkviolet}{rgb}{0.58, 0.0, 0.83}
\definecolor{lavender}{rgb}{0.45, 0.31, 0.59}
\begin{document}

\title{Model Reduction for Controlled Quantum Markov Dynamics}

\author{Tommaso Grigoletto, Lorenza Viola and Francesco Ticozzi
\thanks{{T.G. and F.T. are with the Department of Information Engineering, and F.T. is also with the Padua Quantum Technologies Research Center, both at Università degli Studi di Padova, Italy.} L.V. is with the Department of Physics and Astronomy, Dartmouth College, Hanover, New Hampshire, 03755, USA, where F.T. also holds a secondary affiliation. F.T. acknowledges funding from the European Union - NextGenerationEU, within the National Center for HPC, Big Data and Quantum Computing (Project No.\,CN00000013, CN 1, Spoke 10). F.T and T.G. are partially supported by the Italian Ministry of University and Research under the PRIN project ``Extracting essential information and dynamics from complex networks", grant No.\,2022MBC2EZ. Work at Dartmouth was supported in part by the US Army Research Office under grant No.\,W911NF2210004.} 
}

\maketitle

\date{\today}

\begin{abstract}
We consider the problem of model reduction for Markovian quantum systems whose dynamics are described by a time-dependent Lindblad generator -- notably, as arising in the presence of external control. Our approach, which builds upon Krylov operator subspaces and operator-algebraic techniques introduced for time-independent generators, returns a reduced model that reproduces {\em exactly} the evolution of observables of interest and is {\em guaranteed} to be in Lindblad form. 
\end{abstract}

\section{Introduction}

Obtaining reduced-order models that respect structural constraints of dynamical models of interest is essential for reducing the demand of direct simulation and facilitate analysis and optimization tasks. 
We focus on a large class of quantum systems described by {\em time-dependent} Markovian (semigroup) dynamics \cite{lindblad1976generators,Gorini:1975nb} and leverage a series of results recently established for model reduction (MR) of time-independent generators \cite{tac2023,tit2025,quantum2025,ahp2025} to obtain {\em smaller-dimensional} descriptions of controlled, parametric or time-dependent quantum semigroups. Our operator-algebraic approach allows for {\em exactly} reproducing the expectation values of a set of target observables and, crucially, it guarantees that the reduced model is a valid quantum model, leading to completely-positive, trace-preserving (CPTP) reduced dynamics.

Physically, the kind of controlled dynamical models we consider arises in a variety of scenarios where time-local master equations are used to describe driven Markovian dynamics \cite{Kossakowski2010,Kolodynski2018,Landi2019} -- including the important case of periodically driven (Floquet-Markov) dynamics \cite{Mori2023} -- or where parametric families of generators are relevant -- such as in studies related to Hilbert space fragmentation \cite{HSF,Pollman2023}. In the latter case, treating these parameters as the controls, an exact MR for the set can be obtained using our approach. Another natural use of the proposed technique include the study of controlled open quantum systems: for instance, in situations where a fully-controllable system of interest interacts with an uncontrollable (possibly non-Markovian) bath, our techniques allow to obtain a simplified model in terms of a smaller-dimensional bath. 

Other applications may be envisioned in a variety of open-loop and feedback quantum-control scenarios \cite{altafini2012modeling,wiseman2009quantum}. In particular, in the context of feedback quantum control, the proposed technique can be used in combination with existing results \cite{ahp2025} to compute a reduced-order observer. This would lessen the computational effort associated with the real-time implementation of a filtering-based feedback scheme, potentially extending the viability of the techniques beyond what is achievable experimentally \cite{photon-box}. Our MR procedures are also well-suited for control strategies that entail switching between a finite number of generators \cite{ticozzi2012entanglement,scaramuzza2015,vom2019reachability, grigoletto2022, liang2024dissipative}.

The possibility of reducing a controlled dynamical system is naturally tied, and in fact complementary, to the extent the dynamical system explores the state space, and thus its controllability properties. Controllability analyses of Markovian open quantum systems have been developed under different assumptions \cite{Altafini_2003, Schirmer2010, Malvetti_2024}, and a general geometric approach based on Lie semigroups has been put forward \cite{d2021introduction,Dirr2009}.
While, as mentioned, our approach builds on techniques developed for time-independent generators \cite{quantum2025}, a key difference in the present analysis relates to the construction of the operator subspace that contains all the observables of interest evolved in the Heisenberg picture. 

The content is structured as follows. In Sec.\,\ref{sec:problem_setting} we introduce relevant notation and formulate the problem of interest. Sec.\,\ref{sec:quantum_model_reduction} contains our main contribution -- the  proposed observable-based MR algorithm. In Sec.\,\ref{sec:central_spin} we illustrate the MR procedure by studying a controlled dephasing central-spin model, along with mentioning other possible examples. 

\section{Problem setting: Controlled dynamics \\ in the Heisenberg picture}
\label{sec:problem_setting}

We consider finite-dimensional quantum systems, with associated Hilbert space $\Hc\simeq\Cb^n$, and operator space $\Bf(\Hc)\simeq\Cb^{n\times n}$. The state of the system is described by a density operator $\rho\in\Df(\Hc)$, where $\Df(\Hc) \equiv \{\rho\in\Bf(\Hc)| \rho=\rho^\dag\geq0, \tr(\rho)=1\}$, while physical observables are modelled as self-adjoint operators $O\in\Bf(\Hc)$, $O=O^\dag$. The expectation value of $O$ with respect to a state $\rho$ is given by $\expect{O}_\rho=\tr[O\rho]$. In the following, we shall drop the subscript $\rho$ whenever the state is clear from context. We describe the evolution of the system in the Heisenberg picture, via a time-local quantum master equation of the form
\begin{equation}
    \dot{O}(t) = \Lc_{u(t)}[O(t)].
    \label{eq:model}
\end{equation}
Here, $u(t):\Rb_{\geq 0}\to \Uc$ denotes the control input, that takes values in an admissible  set $\Uc\subseteq\Rb^m$, and $\Lc_{u}$ is a {\em controlled Lindblad generator}, namely, in units where $\hbar=1$, it may be expressed in the form \cite{Gorini:1975nb, lindblad1976generators}
\begin{equation}
\Lc_u  [O] = i [H_u, O] + \sum_k \Dc_{L_{u,k}} [O] ,
    \label{eq:lindblad}
\end{equation}
where $H_u=H_u^\dag$ describes the Hamiltonian (coherent) contribution to the dynamics, and the dissipative contributions have the form $\Dc_L[O] \equiv L^\dagger O L -\tfrac{1}{2} \{ L^\dagger L, O\}$, with $L_{u,k}\in\Bf(\Hc)$ being noise (Lindblad) operators. In the case where $\Lc$ is time-independent, super-operators of this type are the generators of CP and {\em unital} (i.e., identity-preserving) one-parameter quantum dynamical semigroups $\{\Lambda_{t,t_0} \equiv e^{\Lc (t-t_0)}\}_{t\geq t_0}$, resulting in CP-divisible (Markovian) dynamics \cite{benatti2024quantum}. 

The above general setting encompasses several representative situations of interest. In particular, we highlight a few:
\begin{itemize}
    \item Open-loop coherent control, whereby the dissipator is time-independent but $H(t) = H_0+ H_c(t)$, with $H_c(t) \equiv \sum_\ell u_\ell(t) H_\ell$, and $H_\ell\in \Bc(\Hc)$ represent independently tunable control directions \cite{dalessandro_quantum_2003}. 
    
    \item Dissipative control, whereby the strength of (some) noise operators is modulated in time, e.g., $L_{u,k}(t) \equiv u_k(t) \,L_k$, possibly in conjunction with time-dependent Hamiltonian control as above. Relevant settings include open-loop or feedback switching control scenarios \cite{ticozzi2012entanglement,scaramuzza2015, vom2019reachability, grigoletto2022, liang2024dissipative} or, in cases where the control inputs are time-periodic, Floquet-Lindblad master equations \cite{Mori2023}.
    
    \item Time-dependent perturbative scenarios, whereby the Hamiltonian and/or the noise operators are modified as $H(t)\equiv H_0 + u(t) H_1$,  $L_{u,k}(t) \equiv L_{0,k} + u_k(t) L_{1,k}$, such as in adiabatic elimination-type settings \cite{adiabadic-elimination}. 
\end{itemize}

Imagine we are given a finite (or finally generated) subset of observables of interest, say, $\Omega \equiv \{O_j\}$, whose time-dependent expectation values we wish to simulate more efficiently than using the full model. Without loss of generality, we can assume that $\one\in\Omega$. Given a state $\rho\in\Df(\Hc)$ and an initial observable $O(t_0)\in\Omega$ at time $t_0$, the solution of \eqref{eq:model} is formally given by the time-ordered exponential 
\begin{align}
   O(t) & \equiv \Lambda_{t,t_0}[O(t_0)] \equiv \Tb e^{\int_{t_0}^t \Lc_{u(\tau)}d\tau}[O(t_0)], \label{Tobs}\\
   &= \sum_{n=0}^{+\infty} \int_{t_0}^t dt_1 \dots \int_{t_0}^{t_{n-1}} dt_n \,\Lc_{u(t_1)} \dots \Lc_{u(t_n)}[O(t_0)], \nonumber
\end{align}
where $\Tb$ denotes the (Dyson) time-ordering symbol and, compared to the time-independent case, the maps $\{ \Lambda_{t, t_0}\}$ obey the more general forward composition law $\Lambda_{t,s} \circ \Lambda_{s,t_0} = \Lambda_{t,t_0}$, for any $t \geq s\geq t_0$ \cite{Kossakowski2010}. 
The resulting time-dependent expectation value is \(\expect{O(t)} = \tr[O(t)\rho]\). While care is needed in ensuring that time-dependent generators respect physical consistency conditions \cite{Kolodynski2018}, we will assume those to be obeyed and focus on mathematical aspects of the problem in what follows. 

By linearity of the dynamics, the ability to simulate the observables in $\Omega$ directly entails the ability to simulate any linear combinations of them. Hence, we can also consider $\Omega$ to be, in general, a finite-dimensional operator subspace. 

In this work, we focus on the following: 
\begin{problem}
    Given a time-dependent quantum model  \eqref{eq:model} and a finite set of target observables $\Omega$, find a smaller (if one exists) Hilbert space $\check{\Hc}$ on which we can define a controlled Lindblad generator $\check{\Lc}_{u}$, a time-dependent master equation 
    $$ \frac{d}{dt}\check{O}(t) = \check{\Lc}_{u(t)}[\check{O}(t)],$$
    and a map $\Phi:\Df(\Hc)\to\Df(\Kc)$ so that, for all $\rho\in\Df(\Hc)$, all $t\geq0$, and all $u(t):\Rb_{\geq0}\to\Uc$, we have 
    \[\expect{O(t)}_\rho = \expect{\check{O}(t)}_{\Phi(\rho)}.\]  
\end{problem}

As in the time-independent case \cite{quantum2025}, the key challenge is to ensure that the reduced generator $\check{\Lc}_{u}$ defines a \emph{valid quantum model} (indeed, a Lindbladian in our case). Satisfying this structural requirement is in fact a challenge for any quantum MR procedure -- notably, adiabatic elimination \cite{Tokieda_2024} -- and a prerequisite for implementation on a quantum simulator.

\section{Quantum model reduction for the observable dynamics}
\label{sec:quantum_model_reduction}
\subsection{Observable subspace}

Building on the results derived in \cite{quantum2025} for time-independent Markovian generators, we start by defining the notion of indistinguishable states: two states $\rho_1,\rho_2\in\Df(\Hc)$ are \textit{indistinguishable given the expectations in $\Omega$}, when for all control signals $u(t)$, all times and all observables $O\in\Omega$, we have $\expect{O(t)}_{\rho_1} = \expect{O(t)}_{\rho_2}$.
The \textit{(Krylov) observable subspace} is then defined as follows:
\begin{equation}
    \Os \equiv \Span\{\Tb e^{\int_0^t \Lc_{u(\tau)}d\tau}(O),\, \forall t\geq0, \forall u(t),\, \forall O\in\Omega\}.
    \label{eq:observable_subspace}
\end{equation}
The observable\footnote{The naming convention here is unfortunate, since it clashes between the physics and the control-systems theory literature, where the world ‘‘observable'' take different meanings.} subspace characterizes the states that are indistinguishable: by linearity of the dynamics, two states $\rho_1,\rho_2$ are indistinguishable (given the expectations in $\Omega$) when their difference is perpendicular to $\Os$, i.e., $\rho_1-\rho_2\perp\Os$ or $\tr[X(\rho_1-\rho_2)]=0$ for all $X\in\Os$.
Note that the operator space $\Os$ coincides with the attainable subspace in the Heisenberg picture. Through duality with respect to $\inner{\cdot}{\cdot}_{\text{HS}}$, one can also prove that $\Os$ coincides with the orthogonal subspace to the more common non-observable subspace, $\Ns$. The following proposition helps us to more explicitly characterize the observable subspace $\Os$:

\begin{proposition}
\label{prop:observable}
    $\Os$ is the smallest $\Lc_u$--invariant (for all $u\in\Cb^m$) subspace containing $\Omega$. 
    Furthermore, let $\Ls \equiv \alg\{\Lc_u\}$ denote the associative superoperator algebra generated by the controlled Lindblad generators. Then, we have \(\Os = \Span\{\Fc(O),\, \forall \Fc\in\Ls,\,\forall O\in\Omega\}.\)
\end{proposition}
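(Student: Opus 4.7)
My plan is to prove the two assertions in sequence: first I would establish that $\Os$ is the smallest subspace containing $\Omega$ that is invariant under every $\Lc_u$, $u\in\Cb^m$; then I would leverage this minimality to obtain the algebraic characterization. The containment $\Omega\subseteq\Os$ is immediate from evaluating \eqref{eq:observable_subspace} at $t=0$, where the time-ordered exponential reduces to the identity map.

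For $\Lc_u$-invariance of $\Os$, I would take an arbitrary $X=\Tb e^{\int_0^t\Lc_{u(\tau)}d\tau}[O]\in\Os$ with $O\in\Omega$, and append a constant-$u$ segment of length $\epsilon$ to the control. The forward composition law $\Lambda_{t+\epsilon,0}=e^{\epsilon\Lc_u}\circ\Lambda_{t,0}$ ensures that $e^{\epsilon\Lc_u}[X]\in\Os$ for every $\epsilon\geq 0$; because $\Os$ is a finite-dimensional (hence closed) subspace of $\Bf(\Hc)$, the derivative at $\epsilon=0^+$, which equals $\Lc_u[X]$, also belongs to $\Os$. The passage from real $u\in\Uc\subseteq\Rb^m$ to complex $u\in\Cb^m$ relies on the fact that $\Lc_u$ is polynomial in $u$: decomposing $\Lc_u=\sum_\alpha f_\alpha(u)M_\alpha$ in a fixed basis of superoperators $\{M_\alpha\}$, invariance under sufficiently many real $u$'s forces invariance under each $M_\alpha$, and hence under every complex $\Lc_u$ by linearity.

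For minimality, I would take any subspace $\Vc$ containing $\Omega$ and invariant under every $\Lc_u$, and invoke the Dyson-series expansion displayed right after \eqref{Tobs}. Each iterated action $\Lc_{u(t_1)}\cdots\Lc_{u(t_n)}[O]$ lies in $\Vc$ by repeated application of invariance to $O\in\Omega\subseteq\Vc$; closedness of $\Vc$ preserves membership under the nested integrations and under the absolutely convergent series sum, so $\Tb e^{\int\Lc_{u(\tau)}d\tau}[O]\in\Vc$ for every admissible control and every $O\in\Omega$. Thus $\Os\subseteq\Vc$, completing the first statement.

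The algebraic identity then follows by setting $\Vs\equiv\Span\{\Fc(O):\Fc\in\Ls,\,O\in\Omega\}$ and verifying both inclusions. The inclusion $\Vs\subseteq\Os$ is a consequence of the $\Lc_u$-invariance just established, since every $\Fc\in\Ls$ is a linear combination of iterated compositions of the $\Lc_u$'s, applied to some $O\in\Omega\subseteq\Os$. Conversely, $\Vs$ contains $\Omega$ (take $\Fc=\id$ in the unital algebra $\Ls$) and is itself $\Lc_u$-invariant since $\Lc_u\circ\Fc\in\Ls$; hence $\Os\subseteq\Vs$ by minimality. The only delicate points are the complex-parameter extension and the limit/integration manipulations inside the finite-dimensional subspace, both of which are routine; I do not foresee any conceptual obstacle beyond carefully mirroring the time-independent argument of \cite{quantum2025}, with the Dyson series playing the role previously occupied by the matrix exponential.
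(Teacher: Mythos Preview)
Your proposal is correct and follows essentially the same route as the paper: append a constant-control tail to show that $e^{\epsilon\Lc_{\bar u}}$ preserves $\Os$, extract the generator (you differentiate at $\epsilon=0^+$, whereas the paper expands the exponential in a power series and tests against $\Os^\perp$), and then invoke the Dyson series for minimality and the algebraic description. Your explicit handling of the passage from $u\in\Uc\subseteq\Rb^m$ to $u\in\Cb^m$ via the polynomial dependence of $\Lc_u$ on $u$ is in fact more careful than the paper, which simply asserts invariance for all $\bar u\in\Cb^m$ without comment.
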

\begin{proof}
The fact that $\Os$ contains $\Omega$ follows trivially from the definition, as $\Tb e^{\int_0^0 \Lc_{u(\tau)}d\tau} = \Ic$, the identity superoperator. We next prove that $\Os$ is $\Lc_u$--invariant for all $u\in\Cb^m$. Given an observable $O\in\Omega$, a time $t\geq t_0\equiv 0$, and a control $u(t)$, let $O(t) =\Tb e^{\int_0^{t} \Lc_{u(s)}ds}(O)$ as in Eq.\,\eqref{Tobs}. 
Then, for any $T\geq0$ and any $\bar{u}\in\Cb^m$, we can prove that $e^{\Lc_{\bar{u}}T}(O(t))\in\Os$.
Define a modified control function
\[u'(s) \equiv \begin{cases}
    u(s) &\text{if } s\leq t ,\\
    \bar{u} &\text{if } t<s\leq T.
\end{cases}\]
By using the forward composition property of the propagator, $\Lambda_{t,s} \circ \Lambda_{s,0} = \Lambda_{t,0},$ we have:
\begin{align*}
    \Tb e^{\int_0^{t+T} \Lc_{u'(s)}ds}(O)&=\Tb e^{\int_t^{T} \Lc_{u'(s)}ds}(e^{\int_0^{t} \Lc_{u'(s)}ds}O)\\&=e^{\Lc_{\bar{u}}T}(O(t)),
\end{align*}
where the first term is in $\Os$ by definition.
By linearity, since this holds for any generator of $\Os$, it holds for any operator in $\Os$, i.e., $e^{\Lc_{\bar{u}}T}[X]\in\Os$, for all $X\in\Os$. Equivalently,  we can rewrite the statement as follows: $\tr[e^{\Lc_{\bar{u}}T}(X)Y]=0$, for all $X\in\Os$, $Y\in\Os^\perp$, $\bar{u}$ and $T\geq0$. By expanding the exponential, we then have $\tr[e^{\Lc_{\bar{u}}T}(X)Y] = \sum_{k=0}^{+\infty} \frac{T^k}{k!} \tr[{\Lc_{\bar{u}}^k}(X)Y]$, which is equal to zero if and only if $\tr[{\Lc_{\bar{u}}^k}(X)Y]=0$ for all $k$. This implies that $\tr[{\Lc_{\bar{u}}}(X)Y]=0$, for all $X\in\Os$, $Y\in\Os^\perp$, $\bar{u}$ or, equivalently, ${\Lc_{\bar{u}}}(X)\in\Os$, for all $X\in\Os$ and $\bar{u}$, i.e., $\Os$ is $\Lc_u$--invariant, as claimed. 

To prove that $\Os$ is the smallest $\Lc_u$--invariant subspace that contains $\Omega$, consider an $\Lc_u$--invariant operator space $\Vs \supseteq \Omega$.
Then, for any $O\in\Omega$, $O(t)
\in \Vs$ since 
\begin{align*}
        O(t) = 
        \sum_{n=0}^{+\infty} \int_0^t dt_1 \dots \int_0^{t_{n-1}} dt_n \,\Lc_{u(t_1)} \dots \Lc_{u(t_n)}(O),
\end{align*}
and $\Lc_{u(t_1)} \dots \Lc_{u(t_n)}(O)\in\Vs$. Hence, $O(t)$
is the limit of a linear combination of operators in $\Vs$. This implies that $\Os\subseteq\Vs$ and thus $\Os$ is minimal.

Since $\Os$ is the smallest $\Lc_u$--invariant subspace that contains $\Omega$, we have $\Os = \Span\{\Lc_{u_1}\dots\Lc_{u_k}(O),\, \forall O\in\Omega\}$, for all sequences of values $u_1,\dots,u_k$ and the second part of the statement follows.
\end{proof}

A few points are worth emphasizing:

\noindent (i) Based on Proposition \ref{prop:observable}, it is necessary to compute the {\em associative} algebra $\alg\{\Lc_u\}$, rather than (as one may expect) the Lie algebra ${\rm Lie}\{\Lc_u\}$ \cite{elliott2009bilinear}, which is generally smaller, i.e. ${\rm Lie}\{\Lc_u\}\subseteq\alg\{\Lc_u\}$. This difference stems from the fact that, in our case, $\Os$ is the {\em operator space} that contains the observables evolved in the Heisenberg picture, whereas typically one is interested in computing the {\em set} of evolved observables. The fact that we are also interested in arbitrary linear combinations of evolved observables makes the associative algebra of superoperators $\Ls$ necessary. 

\noindent (ii) While $\{\Lc_u\}$ is not finite in general, in many cases it may not even be finitely-generated. Fortunately, in a few cases of interest, the set $\{\Lc_u\}$ can be taken to be finitely-generated. Those are the cases where $\Lc_u = \sum_{k=1}^m u_k(t) \Lc_{k}$, and include the cases of coherent and dissipative control we mentioned above (either modulated or switching). 

\noindent (iii) In practice, to compute $\Ls$, one can use the same procedure used to compute Lie  \cite{d2021introduction} and replace the Lie-bracket (commutator) $[\cdot,\cdot]$ with the associative (matrix) product. However, computing $\Ls$ is numerically demanding as it requires to compute products of superoperators (whose complexity scales as $\Oc(n^6)$), instead of products of operators (whose complexity scales instead as $\Oc(n^3)$). 
Notably, in the case where $\Lc_u$ is a parametric family of time-independent generators, the observable subspace can be more easily computed as
\(\Tilde{\Os} = \Span \left(\bigcup_u \{\Lc_u^k(O),\,O\in\Omega,k\in\Nb\}\right)\). It is not difficult to verify that $\Tilde{\Os}\subseteq\Os$.

\subsection{Reduced controlled quantum model}

By construction, the operator space $\Os$ contains all the observables of interest, evolved in the Heisenberg picture, for all the possible choices of control inputs $u(t)$: as such, it contains all the necessary degrees of freedom necessary to reproduce the evolution of the desired expectation values. If one were to project the original model of Eq.\,\eqref{eq:model} onto $\Os$, one would obtain the {\em minimal linear} reduced model capable of reproducing all the necessary expectation values $\expect{O(t)}$, similar to the time-independent case discussed in \cite{quantum2025}. Such a model, would not necessarily be a quantum model, however. In order to ensure that this constraint is met, we leverage the properties of operator algebras and CPTP projectors.

Specifically, we close the observable space $\Os$ to an operator algebra $\As\equiv \alg\{\Os\}$. By definition of $\alg$, $\As$ is the smallest associative algebra containing $\Os$ and, because we assumed $\one\in\Omega$, it follows that $\As$ is unital, i.e. $\one\in\As$. Let 
\begin{equation*}
    \As \equiv U\bigg(\bigoplus_k \Bf(\Hc_{F,k})\otimes \one_{G,k}\bigg)U^\dag
\end{equation*}
be the {\em Wedderburn decomposition} of $\As$ \cite{arveson,KLV,quantum2025}, where $\Hc = \bigoplus_k \Hc_{F,k} \otimes \Hc_{G,k}$ and $\one_{G,k}\in\Bf(\Hc_{G,k})$.
With this, we can define the CP and unital projector $\Pc:\Bf(\Hc)\to\Bf(\Hc)$ with ${\rm Im}\Pc = \As$ as 
\begin{equation}
    \Pc(X) \equiv U\bigg[\bigoplus_{k} \frac{\tr_{\Hc_{G,k}}(W_k X W_k^\dag)}{\dim(\Hc_{G,k})}\otimes \one_{G,k}\bigg]U^\dag,
    \label{proj}
\end{equation}
where $W_k$ is a linear operator from $\Hc$ to $\Hc_{F,k}\otimes\Hc_{G,k}$, such that $W_k W_k^\dag = \one_{F,k}\otimes \one_{G,k}$.

Note that $\As$ is isomorphic to a smaller algebra where all the repeated copies of the blocks induced by $\otimes\one_{G,k}$ are removed, that is $\check{\As} \equiv \bigoplus_k \Bf(\Hc_{F,k})\subseteq \Bf(\check{\Hc})$, where $\check{\Hc} = \bigoplus_k \Hc_{F,k}$. Then the projector superoperator $\Pc$ can be factorized into two CP and unital maps, $\Rc:\Bf(\Hc)\to\Bf(\check{\Hc})$, with ${\rm Im}\Rc = \check{\As}$, and $\Jc:\Bf(\check{\Hc})\to\Bf(\Hc)$, ${\rm Im}\Jc = \As$:
\begin{align*}
    \Rc(X) &= \bigoplus_k \frac{\tr_{\Hc_{G,k}}(W_k X W_k^\dag)}{\dim(\Hc_{G,k})} ,\\
    \Jc(\check{X}) &= \bigoplus_k X_k \otimes \one_{G,k},
\end{align*}
where $\check{X} \equiv \bigoplus_k X_k$ with $X_k\in\Bf(\Hc_{F,k}).$ One can easily verify that $\Pc,\Rc$ and $\Jc$ are indeed CP and unital.
These two maps then allow us to compute a reduced model:

\begin{proposition}
    Let $\Os$ be the observable subspace as defined in Eq.\eqref{eq:observable_subspace}, and let $\Jc$ and $\Rc$ be the CP and unital factors of the projector $\Pc$ onto $\As = \alg\{\Os\}$, Eq.\,\eqref{proj}. Let us further define the reduced observables as $\check{O} \equiv \Rc(O)$, for all $O\in\Omega$ and $\check{\Lc}_u = \Jc \Lc_u \Rc$. Then, for any state $\rho$, any control input $u(t)$ and any time $t\geq 0$, we have
    \[\expect{O(t)}_\rho = \expect{\check{O}(t)}_{\Jc^\dag(\rho)}, \;
    \text{ with }\;
    \check{O}(t) = \Tb e^{\int_0^t \check{\Lc}_{u(s)} ds}(\check{O}).\] 
\end{proposition}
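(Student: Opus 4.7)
The plan is to reduce the desired expectation equality to a pointwise operator identity and then prove the latter by a uniqueness argument for a linear time-dependent Cauchy problem. Using the Hilbert--Schmidt adjoint relation $\langle X, \Jc^{\dagger}(\rho)\rangle_{\rm HS} = \langle \Jc(X), \rho\rangle_{\rm HS}$, the claim $\expect{O(t)}_\rho = \expect{\check{O}(t)}_{\Jc^\dagger(\rho)}$ for every $\rho$ is equivalent to the operator equality $O(t) = \Jc(\check{O}(t))$ for every $t\geq 0$ and every admissible control $u(\cdot)$. At $t=0$ this is immediate: $O\in\Omega\subseteq\Os\subseteq\As$, and the composition $\Pc = \Jc\circ\Rc$ acts as the identity on $\As$, so $O = \Jc(\Rc(O)) = \Jc(\check{O})$.

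For $t>0$, I would introduce the candidate trajectory $v(t)\equiv \Rc(O(t))$. Linearity of $\Rc$ gives $\dot v(t) = \Rc(\Lc_{u(t)}(O(t)))$. By Proposition~\ref{prop:observable}, $\Os$ is $\Lc_u$-invariant for every admissible $u$, hence $O(t)\in\Os\subseteq\As$ for all $t$; since $\Pc=\Jc\Rc$ is the identity on $\As$, this yields $O(t) = \Jc(v(t))$. Substituting back produces the closed reduced equation
\[
\dot v(t) = \Rc\,\Lc_{u(t)}\,\Jc(v(t)) = \check{\Lc}_{u(t)}(v(t)), \qquad v(0) = \check{O},
\]
which is precisely the time-dependent master equation defining $\check{O}(t)$. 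Uniqueness of solutions to the linear time-dependent Cauchy problem forces $v(t)=\check{O}(t)$, i.e., $\Rc(O(t)) = \check{O}(t)$. Applying $\Jc$ yields $\Pc(O(t)) = \Jc(\check{O}(t))$, and since $O(t)\in\As$ the left-hand side equals $O(t)$; tracing against $\rho$ then delivers the expectation identity.

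The main obstacle is the closing step, in which the flow on $\Bf(\Hc)$ is turned into a self-contained flow on $\Bf(\check\Hc)$. The decisive ingredient is the containment $O(t)\in\As$ at \emph{every} time, since only then can one legitimately replace $O(t)$ by $\Jc(\Rc(O(t))) = \Jc(v(t))$ inside $\Lc_{u(t)}$ and close the equation for $v$. This in turn relies on three structural facts supplied earlier: the $\Lc_u$-invariance of $\Os$ (Proposition~\ref{prop:observable}), the inclusion $\Os\subseteq\As$ (by definition of algebra closure), and the identity $\Pc=\Jc\Rc$ that makes $\Pc$ a genuine conditional expectation onto $\As$ (guaranteed by the Wedderburn decomposition of the \emph{unital} algebra $\As$, which itself uses $\one\in\Omega$). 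Without these ingredients, $\Rc\Lc_u\Jc$ would only approximate the restriction of $\Lc_u$ to $\As$, and the key identity $\Rc(O(t)) = \check{O}(t)$ would already fail at second order in $t$.
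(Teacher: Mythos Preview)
Your argument is correct. Both your proof and the paper's hinge on the same two structural facts---that $O(t)\in\Os\subseteq\As$ for all $t$ (from the $\Lc_u$-invariance of $\Os$ established in Proposition~\ref{prop:observable}) and that $\Pc=\Jc\Rc$ acts as the identity on $\As$---but you package them differently. The paper works directly with the time-ordered propagators: it introduces a projector $\Pi$ onto $\Os$, uses $\Lc_u\Pi=\Pi\Lc_u\Pi$ to rewrite $O(t)=\Pi\,\Tb e^{\int_0^t\Pi\Lc_{u(s)}\Pi\,ds}\,\Pi(O)$, and then invokes $\Pc\Pi=\Pi$ (from $\Os\subseteq\As$) to replace $\Pi$ by $\Pc=\Jc\Rc$ throughout the Dyson series and identify the result with $\Jc\,\Tb e^{\int_0^t\Rc\Lc_{u(s)}\Jc\,ds}\,\Rc(O)$. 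You instead set $v(t)=\Rc(O(t))$, use $O(t)=\Jc(v(t))$ to close the differential equation for $v$, and appeal to uniqueness for the linear Cauchy problem to conclude $v=\check O$. Your route is slightly more elementary in that it replaces the implicit term-by-term Dyson argument by a single invocation of ODE uniqueness; the paper's route is more algebraic and stays at the level of propagators, which has the minor advantage of not requiring any regularity assumption on $u(\cdot)$ beyond what is needed to define the time-ordered exponential. Either way, the operator identity $\Jc(\check O(t))=O(t)$ follows, and tracing against $\rho$ gives the expectation equality.
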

\begin{proof}
First, let $\Pi$ denote a projector onto $\Os$. Then
\begin{align*}
    O(t) &= \Tb e^{\int_0^t \Lc_{u(s)} ds}(O) = \Pi \Tb e^{\int_0^t \Lc_{u(s)} ds} \Pi(O) \\ &= \Pi \Tb e^{\int_0^t \Pi\Lc_{u(s)}\Pi ds} \Pi(O),
\end{align*}
since $\Os$ is $\Lc_u$-invariant and thus $\Lc_u\Pi = \Pi\Lc_u\Pi$. 

On the other hand, we also have 
\begin{align*}
    \expect{\check{O}(t)}_{\Jc^\dag(\rho)} &= \tr[\Jc^\dag(\rho) \Tb e^{\int_0^t \check{\Lc}_{u(s)} ds}(\check{O})] \\& = \tr[\rho \Jc \Tb e^{\int_0^t \Rc\Lc_{u(s)}\Jc ds}\Rc(O)].
\end{align*}
Then, since $\As \supseteq \Os$, we have $\Pc\Pi = \Pi$ and thus 
\begin{align*}
    \Jc \Tb e^{\int_0^t \Rc\Lc_{u(s)}\Jc ds}\Rc(O) &= \Pc \Tb e^{\int_0^t \Pc\Lc_{u(s)}\Pc ds}\Pc(O) \\ &=\Pi \Tb e^{\int_0^t \Pi\Lc_{u(s)}\Pi ds}\Pi(O) , 
\end{align*}
which concludes the proof. 
\end{proof}

Furthermore, we can prove that the reduced controlled generator $\check{\Lc}_u$ is of Lindblad form: 
\begin{theorem}
\label{thm:reduced_Lindblad}
Let $\mathscr{A}$ be a unital $*$-subalgebra of $\mathcal{B(H)}, $ and let $\rs$ and $\es$ denote the CP and unital factors of the projector in \eqref{proj}, $\Pc = \es\rs$. Then, for any Lindblad generator $\Lc$, its reduction to $\As$, $\check{\Lc}\equiv\rs\Lc\es,$ is also a Lindblad generator, that is, $\check{\Lc}:\Bf(\check{\Hc})\to\Bf(\check{\Hc})$ and $\{e^{\check{\Lc}t}\}_{t\geq 0}$ is a CP, unital quantum dynamical semigroup. 
\end{theorem}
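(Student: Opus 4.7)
The plan is to exhibit $\check{\Lc}$ in explicit GKLS (Lindblad) form, from which the CP and unital properties of $\{e^{\check{\Lc}t}\}_{t\geq 0}$ follow by the standard existence theorem. I would split the argument into three parts: (i) unitality $\check{\Lc}(\one_{\check{\Hc}}) = 0$; (ii) a bimodule identity for $\rs$; and (iii) an explicit GKLS decomposition of $\check{\Lc}$ built from (ii).

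For (i), the Wedderburn form of $\es$ gives $\es(\one_{\check{\Hc}}) = \bigoplus_k \one_{F,k}\otimes\one_{G,k} = \one_{\Hc}$, and since every Lindblad generator annihilates the identity, $\check{\Lc}(\one_{\check{\Hc}}) = \rs\Lc(\one_{\Hc}) = 0$; Hermiticity preservation by $\rs$, $\Lc$, and $\es$ is likewise immediate.

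For (ii), I would prove the bimodule (conditional-expectation) property: for every $Z\in\Bf(\Hc)$ and every $Y\in\As$, $\rs(ZY) = \rs(Z)\rs(Y)$ and $\rs(YZ) = \rs(Y)\rs(Z)$. In the Wedderburn basis, $Y = \bigoplus_k Y_k\otimes\one_{G,k}$ is block diagonal, the compressions $W_k(\cdot)W_k^\dag$ appearing in \eqref{proj} annihilate the off-diagonal components of $Z$, and within each diagonal block the claim reduces to the elementary identity $\tr_{\Hc_{G,k}}\!\bigl[Z'(Y_k\otimes\one_{G,k})\bigr] = \tr_{\Hc_{G,k}}(Z')\,Y_k$.

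For (iii), I would decompose $\Lc$ in Heisenberg GKLS form $\Lc(X) = i[H,X] + \Psi(X) - \tfrac{1}{2}\{\Psi(\one_{\Hc}),X\}$, with $H = H^\dag$ and $\Psi(X) \equiv \sum_k L_k^\dag X L_k$ CP. Substituting $X = \es(\check{X})\in\As$ and applying $\rs$, the bimodule identity of (ii) lets me pull $\rs$ through the commutator and anticommutator, obtaining
\[
\check{\Lc}(\check{X}) = i[\check{H},\check{X}] + \check{\Psi}(\check{X}) - \tfrac{1}{2}\{\check{\Psi}(\one_{\check{\Hc}}),\check{X}\},
\]
where $\check{H}\equiv\rs(H) = \check{H}^\dag$ and $\check{\Psi}\equiv\rs\Psi\es$ is CP as a composition of CP maps; the anticommutator term matches automatically because $\check{\Psi}(\one_{\check{\Hc}}) = \rs(\Psi(\one_{\Hc}))$. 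A Kraus decomposition $\check{\Psi}(\check{X}) = \sum_j \check{L}_j^\dag \check{X}\check{L}_j$ then yields the GKLS form $\check{\Lc}(\check{X}) = i[\check{H},\check{X}] + \sum_j \Dc_{\check{L}_j}(\check{X})$ on $\Bf(\check{\Hc})$, concluding the proof.

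I expect the main obstacle to lie in step (ii): conceptually it is the standard statement that $\rs$ is a $*$-algebra conditional expectation onto a copy of $\As$, but extracting it carefully from the explicit multi-block formula \eqref{proj} takes some bookkeeping. Once the bimodule identity is in hand, the remaining GKLS-level manipulation is essentially mechanical.
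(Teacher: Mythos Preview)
Your argument is correct and takes a genuinely different route from the paper. The paper does not prove Theorem~\ref{thm:reduced_Lindblad} in situ; it simply observes that the statement is the Heisenberg-picture dual of Theorem~4 in the companion work on time-independent generators (stated there for Schr\"odinger-picture CPTP semigroups and CPTP projectors) and invokes Hilbert--Schmidt duality, since $(\rs\Lc\es)^\dag = \es^\dag\Lc^\dag\rs^\dag$ with $\es^\dag,\rs^\dag$ CPTP. You instead stay entirely in the Heisenberg picture and exhibit the GKLS form explicitly via the conditional-expectation (bimodule) identity for $\rs$, which is correctly stated and follows from the Wedderburn block structure just as you sketch. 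This buys a constructive payoff the paper itself flags as ``non-trivial in general'' immediately after the theorem: explicit formulas $\check H=\rs(H)$ and $\check\Psi=\rs\Psi\es$ for the reduced Hamiltonian and CP part, from which the reduced noise operators follow by a Kraus decomposition of $\check\Psi$. The duality route is shorter but opaque on this point.

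One small caveat worth recording: your step~(iii) establishes equality of $\rs\Lc\es$ with the displayed GKLS generator only for $\check X$ with $\rs\es(\check X)=\check X$, i.e., on the block-diagonal subalgebra $\check\As=\bigoplus_k\Bf(\Hc_{F,k})\subseteq\Bf(\check\Hc)$. This suffices for every use of the theorem in the paper (all reduced observables live in $\check\As$), but if you want the literal statement $\check\Lc:\Bf(\check\Hc)\to\Bf(\check\Hc)$ generates a CP unital semigroup, you should remark that the GKLS generator you construct from $\check H$ and $\{\check L_j\}$ is a bona fide Lindblad extension to all of $\Bf(\check\Hc)$ that agrees with $\rs\Lc\es$ on $\check\As$.
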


This result may be seen as the Heisenberg-picture version of an analogous result, Theorem 4, established in \cite{quantum2025}. While the latter theorem was proven for Lindblad generators of CPTP semigroups and CPTP projectors, the above modified version can be identically proven by using the duality of the Hilbert-Schmidt inner product $\inner{\cdot}{\cdot}_{HS}$. While Theorem \ref{thm:reduced_Lindblad} establishes that the reduced generator $\check{\Lc}_u$ is in Lindblad form, finding the reduced Hamiltonian $\check{H}_u \in\Bf(\check{\Hc})$ and noise operators $\check{L}_{u,k}\in\Bf(\check{\Hc})$ is non-trivial in general. We refer the interested reader to \cite[Appendix]{ahp2025} for how to compute the reduced operators. 

Finally, note that while projecting a Lindblad generator $\Lc$ onto an operator algebra is a {\em sufficient} condition to ensure that the reduced generator is in Lindblad form, it is {\em not} necessary in general. Nonetheless, the fact that $\As$ is the smallest algebra that contains $\Os$ ensures that the generator $\check{\Lc}_u$ is the smallest we can obtain with this MR procedure.

\subsection{Sufficient conditions for quantum model reduction}

As we mentioned, computing $\Ls$ (and thus $\Os$) is numerically demanding. For this reason, we next provide two easily verifiable sufficient conditions that allow us to check whether the model can be reduced.  

\begin{proposition}
\label{prop:suff_crit_1}
Let $\{H_u\}$ and $\{L_u\}$ be the controlled Hamiltonian and noise operators associated to the controlled Lindblad operator $\Lc_u$. Then,
\[\Os \subseteq \alg(\Os) \subseteq \alg(\{H_u\}\cup\{L_u\}\cup\{O_j\})\equiv \Fs.\]
\end{proposition}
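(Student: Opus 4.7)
The first inclusion $\Os \subseteq \alg(\Os)$ is immediate from the definition of $\alg$ as the smallest associative algebra containing a given set. For the second inclusion, since $\Fs$ is by construction an algebra, it suffices by minimality of $\alg(\Os)$ to prove the set-level inclusion $\Os \subseteq \Fs$; then $\alg(\Os) \subseteq \Fs$ follows.

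To show $\Os \subseteq \Fs$, I would invoke the characterization given in Proposition \ref{prop:observable}: $\Os = \Span\{\Fc(O) : \Fc \in \Ls,\, O \in \Omega\}$, where $\Ls = \alg\{\Lc_u\}$. By linearity, it is enough to verify that every monomial $\Lc_{u_1}\Lc_{u_2}\cdots\Lc_{u_n}(O)$ lies in $\Fs$ for arbitrary choices $u_1,\ldots,u_n$ and $O\in\Omega$. The plan is to proceed by induction on the number $n$ of superoperator factors.

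For the base case $n=0$, we have $O\in\Omega=\{O_j\}\subseteq \Fs$ directly. For the inductive step, assume $X \equiv \Lc_{u_2}\cdots\Lc_{u_n}(O)\in\Fs$ and expand
\[
\Lc_{u_1}(X) = i[H_{u_1},X] + \sum_k \Big(L_{u_1,k}^\dag X L_{u_1,k} - \tfrac{1}{2}\{L_{u_1,k}^\dag L_{u_1,k},\,X\}\Big).
\]
Each summand is a linear combination of products of $X$ with the operators $H_{u_1}$, $L_{u_1,k}$, $L_{u_1,k}^\dag$. Since $\Fs$ is closed under sums and products and contains $H_{u_1}$, $L_{u_1,k}$ (and their adjoints, given that $\Fs$ is understood as a $*$-algebra, consistent with $H_u$ being self-adjoint and with the Wedderburn-type structure used throughout the paper), each term lies in $\Fs$, and hence so does $\Lc_{u_1}(X)$. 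This closes the induction.

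The only real subtlety — and the step that I would flag as the main ``obstacle'' — is the handling of the noise-operator adjoints $L_{u,k}^\dag$ appearing in the dissipator: the proof goes through cleanly provided the generating set $\{H_u\}\cup\{L_u\}\cup\{O_j\}$ is closed under Hermitian conjugation when forming $\Fs$, which is the natural convention here since the observables $O_j$ and Hamiltonians $H_u$ are self-adjoint and the dissipator automatically pairs each $L_{u,k}$ with its adjoint. With this understood, the argument is essentially a straightforward induction using the explicit Lindblad structure \eqref{eq:lindblad}.
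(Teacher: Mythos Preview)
Your proof is correct and follows essentially the same route as the paper: both arguments boil down to showing that $\Fs$ is $\Lc_u$-invariant (by closure under products and the explicit Lindblad form) and contains $\Omega$, whence $\Os\subseteq\Fs$ and then $\alg(\Os)\subseteq\Fs$ by minimality. The only cosmetic difference is that the paper invokes the ``smallest $\Lc_u$-invariant subspace'' characterization of $\Os$ directly, whereas you unroll the same idea as an explicit induction on monomials $\Lc_{u_1}\cdots\Lc_{u_n}(O)$; your flag about needing $L_{u,k}^\dag\in\Fs$ (i.e., $*$-closure of the generated algebra) is a fair observation that the paper also leaves implicit.
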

\begin{proof}
    Because $\Fs$ is closed w.r.t. linear combinations and matrix products, and given the form \eqref{eq:lindblad} of $\Lc_u$, we have that $\Lc_u(F)\in\Fs$ for all $u\in\Uc$ and for all $F\in\Fs$, i.e., $\Fs$ is $\Lc_u$-invariant for all $u\in\Uc$. $\Omega$ is trivially contained in $\Fs$ and thus, since $\Os$ is the smallest $\Lc_u$-invariant subspace that contains $\Omega$, it must be $\Os\subseteq \Fs$. To conclude, we can observe that, by definition, $\alg(\Os)$ is the smallest algebra containing $\Os$, hence $\Os\subseteq\alg(\Os)\subseteq\Fs$. 
\end{proof}
 
Proposition \ref{prop:suff_crit_1} provides a sufficient criterion to verify if the original controlled model is reducible since, if $\Fs\subsetneq\Bf(\Hc)$, then we can certainly reduce it at least as much as $\Fs$ (possibly more). Since this is only a sufficient condition, however, it is possible to have reduction even if $\Fs=\Bf(\Hc)$. Also, while $\Fs$ is an $\Lc_u$-invariant algebra that contains $\Omega$, it need not be the minimal one. Still, in terms of numerical complexity, an important advantage resulting from this condition is that $\Fs$ is an operator algebra, and thus its calculation involves only products of operators (with a complexity of $O(n^3)$), not of superoperators. 

\begin{proposition}
\label{prop:suff_crit_2}
    Assume that the controlled Lindblad generator has the form $\Lc_u = \Lc_0 + \Lc_u',$ and let $\Os_0$ denote the observable space generated by $\Lc_0$ alone, i.e., \(\Os_0 \equiv \Span\{\Lc_0^i(O),\, \forall i\in\Nb,\, \forall O\in\Omega\}\). Then, if $\Os_0$ is $\Lc_u'$-invariant for all $u$, $\alg(\Os) = \alg(\Os_0)$.
\end{proposition}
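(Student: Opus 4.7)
The plan is to establish the stronger identity $\Os = \Os_0$, from which the claimed algebra equality $\alg(\Os) = \alg(\Os_0)$ follows at once. The argument reduces to two inclusions, with essentially all the subtlety concentrated in the second.

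For the inclusion $\Os \subseteq \Os_0$, I would invoke the first part of Proposition \ref{prop:observable}, which characterizes $\Os$ as the smallest $\Lc_u$-invariant subspace containing $\Omega$. It then suffices to verify that $\Os_0$ enjoys both properties. Containment $\Omega \subseteq \Os_0$ is immediate from the $i=0$ term in the defining span; $\Lc_0$-invariance is built in through the index shift $\Lc_0(\Lc_0^i(O)) = \Lc_0^{i+1}(O) \in \Os_0$; and $\Lc_u'$-invariance for every $u$ is exactly the hypothesis of the proposition. Writing $\Lc_u = \Lc_0 + \Lc_u'$ and using linearity, $\Os_0$ is $\Lc_u$-invariant for every admissible $u$, so minimality of $\Os$ gives $\Os \subseteq \Os_0$.

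For the reverse inclusion $\Os_0 \subseteq \Os$, I would use the second characterization in Proposition \ref{prop:observable}, namely $\Os = \Span\{\Fc(O): \Fc \in \Ls,\, O \in \Omega\}$ with $\Ls = \alg\{\Lc_u\}$. It is enough to show that $\Lc_0 \in \Ls$: then $\Lc_0^i \in \Ls$ for every $i \in \Nb$, and applying these superoperators to $O \in \Omega$ produces every generator of $\Os_0$ inside $\Os$. In the motivating settings of the paper -- in particular affine families $\Lc_u = \Lc_0 + \sum_k u_k \Lc_k$ with $0 \in \Uc$, or switching control with $\Lc_u' \equiv 0$ for some admissible $u$ -- this holds trivially because $\Lc_0 \in \{\Lc_u\}$; more generally, for any affine-in-$u$ family, $\Lc_0$ is recovered from $m+1$ generic values of $u$ by a Vandermonde inversion and therefore still lies in $\Ls$. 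Combining both inclusions yields $\Os = \Os_0$, hence the asserted equality of algebras.

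The main obstacle, as I anticipate it, is precisely this second step: the argument hinges on the drift $\Lc_0$ being accessible inside the superoperator algebra generated by the admissible family. If the control set were so restrictive that $\Lc_0$ could not be recovered from $\{\Lc_u\}$, only the one-sided bound $\alg(\Os) \subseteq \alg(\Os_0)$ would survive from the first step -- which is already operationally useful, since it means the easily computed $\alg(\Os_0)$ may safely replace $\alg(\Os)$ as the target algebra for the CP projection in Theorem \ref{thm:reduced_Lindblad}, but only under the natural affine/switching hypothesis on $\{\Lc_u\}$ does one obtain the full algebra identity stated in the proposition.
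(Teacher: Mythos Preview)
Your approach is essentially that of the paper: both arguments establish the stronger identity $\Os=\Os_0$ by noting that $\Os_0$ is $\Lc_0$-invariant by construction, $\Lc_u'$-invariant by hypothesis, hence $\Lc_u$-invariant, and then invoking the minimality characterization of $\Os$ from Proposition~\ref{prop:observable}. The paper's proof is a single sentence and does not isolate the reverse inclusion $\Os_0\subseteq\Os$ at all; your observation that this direction tacitly requires $\Lc_0\in\{\Lc_u\}$ (equivalently, $\Lc_u'=0$ for some admissible $u$, as in the drift-plus-control decompositions the paper has in mind) is a genuine refinement, and your fallback remark that the one-sided bound $\alg(\Os)\subseteq\alg(\Os_0)$ already suffices for the intended application is correct and useful.
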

\begin{proof}
    By definition, $\Os_0$ is the smallest $\Lc_0$--invariant subspace containing $\Omega$. By hypothesis, $\Os_0$ is also $\Lc_u'$--invariant, hence $\Os = \Os_0$ and the statement follows trivially. 
\end{proof}

The above result is especially relevant to settings where a generator $\check{\Lc_0}$ has been computed for the free dynamics \cite{quantum2025}; if $\Oc_0$ is preserved under the controlled dynamics, MR need only be carried out for $\Lc'_u$.  Note that, whenever the assumptions of Proposition \ref{prop:suff_crit_2} apply, it is numerically convenient to compute $\Os_0$ rather than $\Os$, since this calculation involves only repeated applications of the generator $\Lc_0$ \cite{quantum2025}. A characterization of the generators $\Lc$ that leave an operator algebra invariant is given in \cite{hasenohrl2023generators}.

\section{Controlled dephasing central-spin model} 
\label{sec:central_spin}

Consider a central-spin model inspired by studies of quantum decoherence \cite{zurek1982environment}, composed of one central spin-$1/2$ (a qubit), with $\Hc_S\simeq\Cb^2$ (denoted by $0$), and a bath made of $N$ spins (denoted by $k=1,\dots,N$), that is, $\Hc_B\simeq\Cb^{2^{N}}$ and $\Hc = \Hc_S\otimes \Hc_B$. We take the free (uncontrolled) evolution to be defined by the Hamiltonian $H_0 \equiv H_{B,0} + H_{{\rm int},0}$, with 
\begin{align*}
    H_{B,0} &= \sum_{k=1}^{N} J_{k,k}\sigma_z^{(k)} + \sum_{j,k=1}^{N} J_{j,k}\sigma_z^{(j)}\sigma_z^{(k)} ,\\
    H_{{\rm int},0} &= \sum_{k=1}^{N} J_{0,k}\sigma_z^{(0)}\sigma_z^{(k)},
\end{align*}
along with the noise operators $L_k = \gamma_k\,\sigma_z^{(k)}$, $\gamma_k \geq 0$, for all $k=1,\dots,N$. Here, $\sigma_\ell^{(j)}$, $\ell=0,x,y,z$, denotes a Pauli matrix acting non-trivially only on the $j$th-spin. We then assume to have full controllability over the central spin\footnote{While here the central spin is taken to be driftless, $H_{S,0}\equiv 0$, we could have equally assumed $H_{S,0}$ to be nonzero, e.g., $H_{S,0}\equiv \omega\sigma_z^{(0)}$, and complete controllability to arise from single-axis control, e.g., $u_1(t)\equiv 0$ in the above $H_c(t)$, as more common in physical settings.}, for instance via the control Hamiltonian $H_c (t) \equiv \sigma_x^{(0)}u_0(t) + \sigma_z^{(0)} u_1(t)$. Finally, we assume that the observables of interest are arbitrary central-spin observables, i.e., $O_\ell = \sigma_\ell^{(0)}$ for $\ell=0,x,y,z$, and $\Omega \simeq (\Cb^2)^2$.

Given the definitions above (and the fact that every bath operator is diagonal in the standard $z$ basis), one finds that
\begin{align*}
    \Fs&=\alg(\{H_0,H_u\}\cup\{L_k\}\cup\{O_j\})\\ &= \Span\{\sigma_u \otimes \ketbra{j}{j},\,u=0,x,y,z,\,j=0,\dots,2^{N}-1\}\\
    &\simeq \bigoplus_{q=1}^{2^{N}} \Bf(\Hc_S) ,
\end{align*}
where $\ket{j}$, with $j=0,\dots,2^{N}-1$, denotes the standard basis for $\Hc_B$. In the following, we take $\As=\Fs$ even tough, for certain choices of the parameters $J_{j,k}$ and $\gamma_k$, further MR could arise by joining together block-diagonal elements. This fact implies that, in this central spin model, the bath can be reduced to a classical model, making the overall system-bath pair an hybrid quantum-classical model \cite{barchielli2024hybridquantumclassicalsystemsquasifree}. 
The unitary matrix that takes $\As$ into its Wedderburn decomposition is the permutation matrix that inverts the order of the Kronecker product, i.e., $U\in\Bf(\Hc)$ such that $U(A\otimes B)U^\dag = B\otimes A$, for all $A\in\Bf(\Hc_S)$ and all $B\in\Bf(\Hc_B)$. 
In this case, the CP, unital projector onto $\As$ is given by 
\begin{align*}
    \Pc(X) = \bigoplus_{q=1}^{2^{N}} W_q X W_q^\dag ,\quad W_k \equiv (\bra{q}\otimes \one_2) U,
\end{align*}
with $\bra{q}$ denoting the standard basis for $\Cb^N$. The two factors can be taken to be the projector itself, i.e., $\Rc = \Jc = \Pc$. Note that in this case, even tough we do not obtain a reduction in the representation of the involved operators, since $\Rc(X)\in\Cb^{n\times n}$, there is still an effective MR as all the off-diagonal blocks are not relevant for the dynamics of the observables in $\Omega$. In the following, we denote with $\check{X}_q$ the operator in the $q$th block of the diagonal, i.e., $\check{X}_q \equiv W_q X W_q^\dag$.

In the new basis, the observables of interest take the form 
\[ U \sigma_u^{(0)} U^\dag = \bigoplus_{q=1}^{2^{N}} \sigma_u\]
and, similarly, the Hamiltonian and noise operators read 
\begin{align*}
    U H_c(t) U^\dag &= \bigoplus_{q=1}^{2^{N}} \sigma_x u_0(t) + \sigma_z u_1(t),\\
    U H_{B,0} U^\dag &= \bigoplus_{q=1}^{2^{N}} \alpha_q \one_2,\\
    U H_{{\rm int},0} U^\dag &= \bigoplus_{q=1}^{2^{N}} \beta_{q} \sigma_z,\\
    U L_k U^\dag &= \bigoplus_{q=1}^{2^{N}} \eta_{k,q} \gamma_k\one_2.
\end{align*}
Here, the coefficients $\alpha_q\in\Rb$ depend on the couplings $J_{j,k}$, $\beta_q\equiv \sum_{k=0}^{N} (-1)^{\bin(q)[k]} J_{0,k}$, where $\bin(q)[k]\in\{0,1\}$ denotes the $k$-th bit in the binary representation of $q$ (in lexicographical order, i.e., $k=0$ denotes the least significant bit),
and $\eta_{k,q}\in\{\pm1\}$ depending on $k$ and $q$.

For any initial state $\rho\in\Df(\Hc)$ (whether factorized or entangled across system and bath), the reduced state is then given by $\Jc^\dag(\rho) = \bigoplus_{q=1}^{2^N} W_q \rho W_q^\dag$, where only the blocks in the diagonal $\check{\rho}_q$ are relevant. Furthermore, given the structure of $H_{B,0}$ and $L_k$, the reduced model turns out to be equivalent to a \emph{classical} ensemble of $2^{N}$ spins, each one evolving with its own dynamics given by 
\begin{align}
   \dot{\check{O}}_q(t) &= -i[\check{H}_q(t),\check{O}_q(t)],
   \label{eq:reduced_model}\\
   \check{H}_q(t) &= [\beta_q + u_1(t)]\sigma_z + u_0(t)\sigma_x.\nonumber
\end{align}
The expectation value of any observable of interest $\sigma_\ell^{(0)}$ is then given by 
\[ \expect{\sigma_\ell^{(0)}(t)}_\rho = \sum_{q=1}^{2^N} \expect{\sigma_{\ell,q}(t)}_{\check{\rho}_q}\]
where $\sigma_{\ell,q}(t)$ is the solution of Eq.\,\eqref{eq:reduced_model} when $\check{O}_q(0) = \sigma_\ell$. 

If we assume to introduce a controlled dissipation to the model through some noise operators, say, $L_u (t) = u_2(t) \sigma_x^{(k)}$ for some $k=1,\dots,N$, then $\alg(\{H_0,H_u\}\cup\{L_k,L_u\}\cup\{O_j\})$ is no longer diagonal in the bath basis. Nevertheless, one may note that $\As$ is left invariant by the generator $\Dc_{L_u}(O)$ and thus, by a trivial extension of Proposition \ref{prop:suff_crit_2}, the model can still be reduced by projecting it onto $\As$. By direct calculations, one can also observe that, in this case, the reduced model remains the one described by Eq.\,\eqref{eq:reduced_model} since the reduced controlled operator acts trivially on the observables of interest. The same reasoning also hods for other local controlled dissipations acting on the bath along different axes, e.g., $\sigma_y,\sigma_+,\sigma_-$, or even for collective dissipation, e.g., $L_u (t) = u_2(t) \sum_{k=1}^N \sigma_x^{(k)}$ or similar.

\medskip

\section{Conclusions and Outlook }

In this work we extended the results of \cite{quantum2025} to the case of time-dependent quantum Markov dynamics, for which we are only interested in reproducing the evolution of a set of observables of interest. The proposed MR reduction method returns a smaller-dimensional \emph{quantum} model that exactly reproduces the expectation value of the observables of interest and is still in Lindblad form.

While, in terms of applications, we have focused here on a particularly simple model, for which the MR can be carried out analytically and which results in effectively classical Markov dynamics, several more complex applications may be envisioned. For instance, building on MR for time-independent generators studied in \cite{quantum2025}, the effect of arbitrary Hamiltonian control on a central spin may be examined in more complex dissipative central-spin models, such as collectively-coupled XYZ models. Likewise, the introduction of time-dependent boundary dissipation in XXZ spin chains, for instance by time-periodic or switched control of boundary Lindblad operators, could be of interest for non-equilibrium many-body physics.

Finally, although in this work we only focused on observable-based MR, the proposed procedure can be adapted to cases where the set of initial conditions is restricted to a few states of interest, thus obtaining a reachable-based MR \cite{quantum2025}. In this case, the optimal MR may need to be performed on an operator algebra with respect to a modified product, a so-called ``distorted algebra''. 

\medskip

\textit{Acknowledgment}: The authors would like to thank Yukuan Tao for valuable discussions on the topics of this work.

\printbibliography{}

\end{document}